\newtheorem{proposition}{Proposition}
\theoremstyle{remark}
\renewcommand{\fps@table}{H}
\def\BibTeX{{\rm B\kern-.05em{\sc i\kern-.025em b}\kern-.08em
    T\kern-.1667em\lower.7ex\hbox{E}\kern-.125emX}}
\begin{document}

\title{DISPATCH - Decentralized Informed Spatial Planning and Assignment of Tasks for Cooperative Heterogeneous Agents

}

\author{
\IEEEauthorblockN{
Yao Liu\IEEEauthorrefmark{1},
Sampad Mohanty\IEEEauthorrefmark{2},
Elizabeth Ondula\IEEEauthorrefmark{3}, and
Bhaskar Krishnamachari\IEEEauthorrefmark{1}\IEEEauthorrefmark{2}
}
\IEEEauthorblockA{
\IEEEauthorrefmark{1}Department of Electrical Engineering, University of Southern California, Los Angeles, CA, USA\\
\IEEEauthorrefmark{2}Department of Computer Science, University of Southern California, Los Angeles, CA, USA\\
\IEEEauthorrefmark{3}Department of Computer Science, Winthrop University, Rock Hill, SC, USA\\
Email: \{jliu3548, smohanty, bkrishna\}@usc.edu, ondulae@winthrop.edu
}
}
\makeatletter
\def\@maketitle{%
  \newpage
  \null
  \vskip 0.5em
  \begin{center}%
  {\LARGE\bfseries \@title\par}
  \vskip 0.5em
  {\lineskip .5em
    \begin{tabular}[t]{c}%
      \@author
    \end{tabular}\par}%
  \vskip 1em%
  \end{center}%
  \par
}
\makeatother

\IEEEpubid{\makebox[\columnwidth]{\hfill\begin{minipage}{\columnwidth}
\centering\scriptsize
This work has been submitted to the IEEE for possible publication.\\
Copyright may be transferred without notice, after which this version may no longer be accessible.
\end{minipage}}\hspace{\columnsep}\makebox[\columnwidth]{}}

\IEEEpubidadjcol

\maketitle
\IEEEpubidadjcol

\begin{abstract}
\IEEEpubidadjcol
Spatial task allocation in systems such as multi-robot delivery or ride-sharing requires balancing efficiency with fair service across tasks. Greedy assignment policies that match each agent to its highest-preference or lowest-cost task can maximize efficiency but often create inequities: some tasks receive disproportionately favorable service (e.g., shorter delays or better matches), while others face long waits or poor allocations.

We study fairness in heterogeneous multi-agent systems where tasks vary in preference alignment and urgency. Most existing approaches either assume centralized coordination or largely ignore fairness under partial observability. Distinct from this prior work, we establish a connection between the Eisenberg–Gale (EG) equilibrium convex program and decentralized, partially observable multi-agent learning. Building on this connection, we develop two equilibrium-informed algorithms that integrate fairness and efficiency: (i) a multi-agent reinforcement learning (MARL) framework, EG-MARL, whose training is guided by centralized EG equilibrium assignment algorithm; and (ii) a stochastic online optimization mechanism that performs guided exploration and subset-based fair assignment as tasks are discovered.

We evaluate on Multi-Agent Particle Environment (MPE) simulations across varying team sizes against centralized EG, Hungarian, and Min–Max distance baselines, and also present a Webots-based warehouse proof-of-concept with heterogeneous robots. Both methods preserve the fairness–efficiency balance of the EG solution under partial observability, with EG-MARL achieving near-centralized coordination and reduced travel distances, and the online mechanism enabling real-time allocation with competitive fairness. Together, these results demonstrate that spatially aware EG formulations can effectively guide decentralized coordination in agents with heterogeneous capabilities.

\end{abstract}

\begin{IEEEkeywords}
Multi-Robot Systems; Task Planning; Reinforcement Learning
\end{IEEEkeywords}

\section{Introduction}

Teams of autonomous agents are increasingly deployed to tackle missions where tasks are spread across space and subject to strict resource constraints. Such domains include disaster response (e.g., search and rescue)~\cite{Bhute2024ImplementingSR}, swarm environmental monitoring~\cite{survey_coop2025,neurofleets2024}, and planetary exploration~\cite{planetary2023}. In these applications, agents must navigate through complex environments and perform service operations at designated sites. Tasks often require \textit{diverse capabilities or skill levels}—for instance, transport, delicate manipulation, or specialized inspection—making assignment both a problem of navigation and matching heterogeneous agents to compatible service demands. These challenges are further compounded by travel and service costs, limited energy resources, and environmental uncertainty~\cite{zhang2025distributed}.

A key challenge in multi-agent task allocation is balancing efficiency with fairness across tasks. Prior approaches in robotics and multi-agent planning aim to maximize collective performance~\cite{framework_simultaneous2023}. However, methods that solely optimize overall system efficiency often create inequities: some tasks are completed quickly and matched to highly capable agents, while others face long delays or poor matches. Chin et al. \cite{chin2024drone} studied drone-base placement for post-disaster search and rescue and show that purely efficiency-driven deployment can create substantial inequities in access across affected regions. In mission-critical or human-facing domains, such disparities can undermine reliability and erode trust in autonomous systems. Imposing fairness constraints can reduce system efficiency, highlighting the fundamental fairness–efficiency trade-off in multi-agent decision-making~\cite{ranjan2025fairness}. This is especially pronounced in spatial, dynamic, and online settings, where tasks differ in urgency, workload, and compatibility with agents.

We study a multi-agent navigation-and-service problem in a two-dimensional workspace with static obstacles, where $n$ heterogeneous agents must reach and serve $m$ spatially distributed tasks under partial observability. Tasks have workload demands and importance weights, while agents differ in service capabilities, captured through a preference vector encoding agent--task compatibility. The goal is to minimize total completion time and travel cost while ensuring fair service across tasks given their preferences and importance.
\begin{figure}[t]
    \centering
\includegraphics[width=0.75\linewidth]{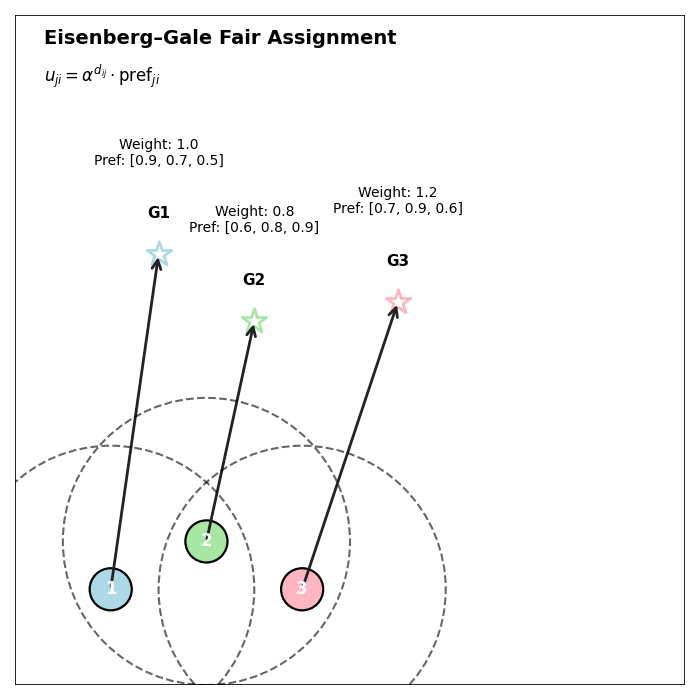}
    \caption{Illustration of the heterogeneous multi-robot navigation-and-service task allocation problem.}
    \label{fig:illustrative_intro}
\end{figure}

To address these challenges, we connect equilibrium theory with decentralized multi-agent coordination. We extend the Eisenberg--Gale framework to embodied agents with spatial costs and partial observability, showing how its fairness structure can guide decentralized policy learning. This yields two complementary algorithms: (i) a CTDE MARL framework that uses a spatially modified EG program as a centralized teacher, and (ii) an EG-inspired online assignment method that provides a centralized baseline. This differs from prior market-based methods such as FMC-TA~\cite{amador2014dynamic}, FMC\_TA+~\cite{cooperative_task_alloc}, and online Fisher/EG pricing~\cite{jalota2024stochastic}, which assume centralized global information, as well as from fair-MARL approaches~\cite{zhang2014fairness,jiang2019learningfairness,aloor2024cooperation}, which rely on reward shaping or max--min objectives without an equilibrium-based fairness--efficiency reference.

\textbf{Fairness Notions.}
We adopt two complementary notions of fairness. First, the centralized Eisenberg--Gale (EG) program provides a \emph{Pareto-efficient} benchmark. Because EG-MARL operates under partial observability, we measure deviation from this benchmark using \emph{regret}, defined as the gap between its realized EG objective and the centralized optimum. 

Second, we evaluate \emph{distributional fairness} via the normalized utility-to-weight ratio $\rho_j = u_j^{\text{realized}} / w_j$, with uniformity quantified by $F(\rho)=1/\mathrm{CV}(\rho)$. We additionally report Jain's Fairness Index,
\[
J(\rho)=\frac{(\sum_j \rho_j)^2}{m\sum_j \rho_j^2},
\]
which equals 1 under perfectly equal service and decreases as disparities increase.
Our setting requires both spatially aware navigation and skill-aligned assignment, where skill compatibility is modeled through the preference coefficients. Our main contributions are summarized as follows:
\begin{itemize}[leftmargin=*,topsep=2pt,itemsep=1pt,parsep=0pt]
    \item \textbf{Spatial-aware Eisenberg-Gale:} We extend the Eisenberg--Gale (EG) framework by incorporating spatial travel costs into the utility formulation, bridging classical equilibrium theory with multi-agent task allocation.
    \item \textbf{EG\mbox{-}MARL.} We establish a centralized training decentralized execution (CTDE) framework that integrates graph-based policies with a spatially aware Eisenberg–Gale (EG) assignment, allowing agents to learn decentralized policies that jointly optimize fairness and travel efficiency under partial observability.
    \item \textbf{Stochastic Online Assignment.} We propose a cooperative explore–and–assign mechanism that operates under centralized communication. Whenever k new tasks are discovered, an Eisenberg–Gale (EG) instance is solved over the currently unassigned agents and discovered tasks. Assigned agents immediately begin servicing their targets, while the remaining agents continue exploring, and this process repeats until all tasks are completed.
    \end{itemize}

We validate our approach in Multi-Agent Particle Environment (MPE) simulations across multiple team sizes and in a Webots warehouse (See Appendix), proof-of-concept with heterogeneous robots.

\section{Background and Related Work}
In multi-agent settings, fairness was studied under planner-driven optimization, where a single controller allocates resources to maximize social welfare~\cite{chen2021welfare}.  Lan and Chiang~\cite{lan2010axiomatic} established an axiomatic foundation for fairness through $\alpha$-fairness and proportional fairness, framing it as a welfare-maximization problem. Market-based formulations such as the Fisher market~\cite{fisher1892mathematical} and the Eisenberg--Gale (EG) program~\cite{eisenberg1959eg} later introduced equilibrium mechanisms that guarantee Pareto efficiency and envy-freeness under budget constraints~\cite{budish2011coursealloc,chaudhury2024chores}. More recently, Kumar and Yeoh proposed DECAF~\cite{kumar2025decaf}, which uses DDQN to learn centralized allocation policies that trade off total utility and long-term fairness under multiple fairness objectives. To address temporal and spatial heterogeneity, Amador et al.~\cite{amador2014dynamic} proposed a Fisher-market-inspired algorithm (FMC-TA) that produces dynamic Pareto-efficient assignments incorporating travel costs. Building on this foundation, Amador and Zivan~\cite{cooperative_task_alloc} introduced \textit{FMC\_TA+}, which extends the market-clearing formulation to concave personal utilities and heterogeneous agent skills. Their results showed that slight concavity encourages cooperative task execution among diverse agents while maintaining Pareto-optimal and envy-free equilibria. Zhang and Shah~\cite{zhang2014fairness} introduced a regularized max--min fairness criterion for factored multi-agent MDPs, linking fairness and efficiency, though their formulation assumes centralized control and full observability.However, the aforementioned frameworks assumed complete global information. More recently, Jalota and Ye~\cite{jalota2024stochastic} generalized EG and Fisher markets to stochastic and online environments through adaptive pricing updates, ensuring near-optimal efficiency and bounded regret. Zimmer et al.~\cite{zimmer2021} propose a decentralized cooperative MARL framework that balances efficiency and equity by optimizing a social welfare function, enabling inequality-sensitive objectives such as max--min and generalized Gini--based fairness. These advances mark a shift from static, centralized equilibria toward dynamic and sequential decision-making, where fairness must emerge through local interactions and learning rather than global coordination.

Complementary to these equilibrium-inspired methods, many works introduce fairness through \textit{reward shaping}, adjusting agents' rewards to penalize disparities or promote balanced contributions~\cite{reuel2024fairness}. Jiang and Lu~\cite{jiang2019learningfairness} proposed a decentralized hierarchical Multi-Agent Reinforcement Learning method that encourages equitable outcomes by optimizing a shaped “fair-efficient” reward combining system utility with a penalty for agents’ deviation from the population-average utility. Aloor et al.~\cite{aloor2024cooperation} further showed that distance-based shaping can enforce equitable service distribution in spatial coordination tasks.

Our work builds upon and extends the Fair-MARL framework of Aloor et al.~\cite{aloor2024cooperation}, retaining their distance-based fairness reward while generalizing it to heterogeneous, skill-aware agents. We further incorporate graph-based representations inspired by InforMARL~\cite{nayak2022scalable}, enabling scalable decentralized execution under partial observability. By embedding the economic fairness guarantees of Eisenberg--Gale markets within this decentralized MARL structure, our framework achieves equitable and efficiency-aware task allocation in spatial, dynamic environments with limited communication.

\section{Preliminaries}

\paragraph{\textbf{Navigation-Service Task}}
\label{sec:obs}
We model the multi-agent\\  navigation-and-service task as a Decentralized Partially Observable Markov Decision Process (Dec-POMDP), where $N$ heterogeneous agents and $N$ tasks are distributed in a two-dimensional workspace with static walls and movable obstacles. Each agent operates under partial observability, aiming to reach and service a single task, with assignments guided by task weights, preferences, and travel distances.

Tasks are static in location but are discovered online as agents explore the environment; an undiscovered task becomes visible only when it enters an agent’s sensing radius. A task is considered completed when its workload has been fully reduced to zero under the preference-weighted service dynamics.

Let $\mathcal{A}$ denote the set of agents and $\mathcal{T}$ the set of tasks (or tasks), with $|\mathcal{A}| = |\mathcal{T}| = N$. Tasks are characterized by workload and priority, while agents differ in service capabilities. The preference parameter $p_{ji}$ quantifies how effectively agent $i$ serves task $j$ and thus governs assignment compatibility. It directly controls how quickly the workload of task $j$ decreases when served by agent $i$, linking compatibility to completion rate. For clarity, we denote this efficiency as $\text{pref}_{ji}$ in workload and progress modeling.

\paragraph{\textbf{Definition of Dec-POMDP}} 
The environment is modeled as a Decentralized Partially Observable Markov Decision Process (Dec-POMDP) defined by the tuple:
$
\langle N,\, \mathcal{S},\, \mathcal{O},\, \mathcal{A},\, \mathcal{G},\, P,\, R,\, \gamma \rangle,
$
where: 
\begin{itemize}[leftmargin=*,topsep=2pt,itemsep=1pt,parsep=0pt]
    \item $N$ is the number of agents.
    \item $\mathcal{S}$ is the state space, including all agents’ and landmarks’ positions, velocities, types, workloads, occupancy status, and obstacle locations.
    \item $\mathcal{O}$ is the local observation space; each agent $i$ receives an egocentric observation $o^{(i)}$ of entities within its sensing radius $r_i$.
    \item $\mathcal{A}$ is the discrete action space; each agent selects unit acceleration/deceleration along $x$ and $y$, subject to velocity limits (including an idle action).
    \item $\mathcal{G}$ builds, for each agent $i$, a local graph $g^{(i)}=\mathcal{G}(s;i)$ of nearby entities. Nodes are entities within $i$’s sensing radius $r_i$; edges link nodes that lie within $r_i$.

    \item $P(s' \mid s, A)$ is the transition kernel that updates positions, velocities, workloads, and task states given the joint action $A=(a^{(1)},\dots,a^{(N)})$, subject to obstacles and walls.
    \item $R(s, A)$ is the joint reward, combining workload progress, navigation-distance reduction, collision penalties, and fairness shaping from the Eisenberg--Gale allocation.
    \item $\gamma \in (0,1]$ is the discount factor.
\end{itemize}

We follow the joint-reward formulation of the Multi-Agent Particle Environment~\cite{lowe2017multi}, where the global reward at each timestep is the sum of all agents’ individual rewards:
\[
R(s_t, A_t)=\sum_{i=1}^{N} r_t^{(i)}.
\]
which encourages cooperative behavior among agents. 

 \paragraph{Agent–Entity Graph.} We adopt the local graph construction from InforMARL~\cite{nayak2022scalable}, which models agent--entity interactions through directed and bidirectional edges to represent perception and communication relationships. At each timestep, every agent $i$ forms a local agent–entity graph $g^{(i)} = (\mathcal{V}^{(i)}, \mathcal{E}^{(i)})$ consisting of all entities (agents, tasks, obstacles) within its sensing radius. Each node $v \in \mathcal{V}$ is an entity in the environment, and each entity is associated with a type:
 \[
 \text{entity\_type}(j) \in \{\text{agent}, \text{obstacle}, \text{task}\}.
 \]
 An edge $e \in \mathcal{E}$ is created between an agent and an entity if the entity lies within the agent’s sensing radius $\rho$. Agent–agent edges are bidirectional, while agent–non-agent edges are directed from the entity toward the agent, reflecting perception. Thus, a unidirectional edge encodes one-way perception from an agent to a nearby entity, while a bidirectional edge captures two-way communication and information sharing between agents. The GNN performs message passing over this graph, producing a relational embedding that captures the structure and attributes of the agent's neighborhood.


\paragraph{\textbf{Centralized Training with Decentralized Execution (CTDE)}}
CTDE enables agents to train using global information during training but operate with local observations at execution time, and has been widely discussed in recent surveys~\cite{amato2024ctde, wong2023deepmultiagent}.
Our CTDE formulation builds on the design principles introduced in InforMARL~\cite{nayak2022scalable}, which demonstrated that local graph-based message passing can enable decentralized policies to scale under partial observability.

During \emph{training}, a centralized critic has access to the global state and all agents’ local graphs, providing value estimates and policy gradients based on full system information. Each agent’s policy $\pi^{(i)}_\theta(a^{(i)}|o^{(i)}, g^{(i)})$ is optimized to maximize the expected discounted reward:
\[
J(\theta) = \mathbb{E}_{A_t, s_t} \Bigg[ \sum_t \gamma^t R(s_t, A_t) \Bigg].
\]
During inference, agents execute their learned policies independently, using only their local observations, communications, and corresponding agent–entity graphs.

\paragraph{\textbf{Eisenberg-Gale Optimization}}

Let $w_j>0$ be the importance weight associated with task $j$, reflecting its priority or urgency.  
We define the effective utility for agent $i$ serving task $j$ by combining preferences with  distance discounting:
\[
u_{ji} = (\alpha^{d_{ij}})\,\text{preference}_{ji}
\]
where $d_{ij}$ is the Euclidean distance between agent $i$ and task $j$, $\alpha \in (0,1)$ is a discount factor penalizing longer travel, and $\text{preference}_{ji}$ reflects how well agent $i$'s skills align with task $j$’s requirements.

We solve the following convex program~\cite{eisenberg1959eg}:
\begin{align}
    \max_{x \ge 0} \quad & \sum_{j \in \mathcal{T}} w_j \log\!\Bigg(\sum_{i \in \mathcal{A}} u_{ji}\,x_{ji}\Bigg)
    \label{eq:eg-objective} \tag{1a} \\
    \text{s.t.}\quad
    & \sum_{j \in \mathcal{T}} x_{ji} = 1, \quad \forall i \in \mathcal{A}
    \label{eq:eg-constraint1} \tag{1b} \\
    & \sum_{i \in \mathcal{A}} x_{ji} = 1, \quad \forall j \in \mathcal{T}
    \label{eq:eg-constraint2} \tag{1c}
\end{align}

\paragraph{Remark}
Although the Eisenberg--Gale (EG) objective in~\eqref{eq:eg-objective} appears nonlinear due to the logarithmic term, it admits an equivalent \emph{linear assignment form} when restricted to one-to-one allocation constraints~(\ref{eq:eg-constraint1})--(\ref{eq:eg-constraint2}). 
Since each feasible matrix $x$ corresponds to a permutation matrix in which every row and column contains exactly one nonzero entry, the inner sum $\sum_i u_{ji}x_{ji}$ for each task $j$ simply selects a single utility value $u_{j i^*}$ corresponding to the assigned agent $i^*$. 
Consequently, the logarithmic term $w_j \log\!\big(\sum_i u_{ji}x_{ji})$ becomes a constant coefficient $w_j \log(u_{j i^*})$ for that assignment, yielding the equivalent linear objective
\[
\max_{x \in \{0,1\}^{N \times N}} \sum_{i,j} s_{ji}\, x_{ji}, 
\quad \text{where } s_{ji} = w_j \log(u_{ji} + \varepsilon).
\]
Thus, under one-to-one matching constraints, the EG program is \emph{linear in disguise}: it can be solved as a linear program with no integrality gap, or equivalently through the Hungarian algorithm with the precomputed score matrix $S = [s_{ji}]$. 

\begin{proposition}[Pareto Efficiency of the One-to-One Eisenberg--Gale Allocation]
The solution $x^*$ to the one-to-one Eisenberg--Gale program is Pareto efficient among all feasible (binary) assignments.
\end{proposition}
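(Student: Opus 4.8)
The plan is to exploit the structure already exposed in the Remark: under the one-to-one constraints~(\ref{eq:eg-constraint1})--(\ref{eq:eg-constraint2}) every feasible $x$ is a permutation matrix, so the realized utility of each task $j$ collapses to the single selected value $u_j(x) := \sum_{i \in \mathcal{A}} u_{ji}\,x_{ji} = u_{j\,\sigma_x(j)}$, where $\sigma_x$ is the permutation induced by $x$. The EG objective then reads $\Phi(x) = \sum_{j \in \mathcal{T}} w_j \log\!\big(u_j(x)\big)$, and $x^*$ is by definition a maximizer of $\Phi$ over the finite set of binary one-to-one assignments. I would then show that any maximizer of such a welfare function is automatically Pareto efficient, where Pareto dominance is measured in the realized-utility vector $(u_j(x))_{j \in \mathcal{T}}$.

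First I would record the two ingredients the proof relies on: (i) each weight satisfies $w_j > 0$, and (ii) each utility is strictly positive, since $u_{ji} = \alpha^{d_{ij}}\,\text{preference}_{ji}$ with $\alpha^{d_{ij}} > 0$ for all finite $d_{ij}$; if some preference vanishes, the $\varepsilon$-regularized score $s_{ji} = w_j \log(u_{ji} + \varepsilon)$ keeps every term finite. Consequently $\log(\cdot)$ — or $\log(\cdot + \varepsilon)$ — is well defined and \emph{strictly increasing} on the relevant range.

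Then I would argue by contradiction. Suppose $x^*$ is not Pareto efficient, so there is a feasible binary assignment $x'$ with $u_j(x') \ge u_j(x^*)$ for every $j \in \mathcal{T}$ and $u_k(x') > u_k(x^*)$ for at least one $k$. Strict monotonicity of the logarithm gives $\log u_j(x') \ge \log u_j(x^*)$ for all $j$, with strict inequality at $k$; multiplying each inequality by the positive weight $w_j$ and summing yields $\Phi(x') > \Phi(x^*)$. This contradicts the optimality of $x^*$, so no dominating $x'$ can exist, and $x^*$ is Pareto efficient.

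The argument is short precisely because the heavy lifting is done by the linear-assignment equivalence in the Remark; the only point requiring care — and hence the main obstacle — is guaranteeing strict positivity of the realized utilities so that the logarithm stays finite and strictly monotone at every assignment being compared. I would address this either through the standing assumption $\text{preference}_{ji} > 0$ or, more robustly, by carrying the $\varepsilon > 0$ regularization throughout, for which $\log(\cdot + \varepsilon)$ remains strictly increasing and the same summation argument goes through verbatim.
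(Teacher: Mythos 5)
Your proposal is correct and follows essentially the same argument as the paper: assume a Pareto-dominating feasible assignment exists, apply strict monotonicity of the logarithm together with positivity of the weights to conclude the weighted log-sum objective strictly increases, and derive a contradiction with the optimality of $x^*$. Your added attention to strict positivity of the realized utilities (via $\text{preference}_{ji} > 0$ or the $\varepsilon$-regularization) is a minor but welcome tightening of a point the paper's proof leaves implicit.
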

\begin{proof}
Let $v_j(x) = \sum_i u_{ji} x_{ji}$ denote the realized utility of task $j$ under allocation $x$. 
Because the logarithmic function is strictly increasing, any alternative feasible one-to-one allocation $x'$ 
that improves at least one task’s utility without reducing others’ would satisfy
\[
\sum_j w_j \log(v_j(x')) > \sum_j w_j \log(v_j(x^*)),
\]
contradicting the optimality of $x^*$. 
Hence, $x^*$ is Pareto efficient \emph{within the discrete feasible set} of one-to-one assignments.
\end{proof}

We provide two baselines for the EG formulation. 
The first is the Hungarian method~\cite{kuhn1955hungarian}, 
which seeks the utilitarian optimum by maximizing total linear preference between agents and tasks:
\begin{align}
    \max_{x \in \{0,1\}^{N \times N}} \quad & \sum_{i \in \mathcal{A}}\sum_{j \in \mathcal{T}} {pref}_{ji}\, x_{ji} \label{eq:hungarian-objective} \tag{2a}\\
    \text{s.t.}\quad 
    & \sum_{j \in \mathcal{T}} x_{ji} = 1, \quad \forall i \in \mathcal{A} \label{eq:hungarian-constraint-rows} \tag{2b}\\
    & \sum_{i \in \mathcal{A}} x_{ji} = 1, \quad \forall j \in \mathcal{T} \label{eq:hungarian-constraint-cols} \tag{2c}
\end{align}

We also include the min-max baseline\cite{aloor2024cooperation}, 
which promotes equity by minimizing the maximum individual distance cost. 
Here, $c_{ji}$ denotes the distance cost (e.g., Euclidean distance) for agent $i$ to reach task $j$.
\begin{align}
    \min_{x \in \{0,1\}^{N \times N}} \quad & \max_{i \in \mathcal{A},\, j \in \mathcal{T}}\, c_{ji} x_{ji} \label{eq:minmax-objective} \tag{3a}\\
    \text{s.t.} \quad 
    & \sum_{j \in \mathcal{T}} x_{ji} = 1, \quad \forall i \in \mathcal{A} \label{eq:minmax-constraint-rows} \tag{3b}\\
    & \sum_{i \in \mathcal{A}} x_{ji} = 1, \quad \forall j \in \mathcal{T} \label{eq:minmax-constraint-cols} \tag{3c}
\end{align}
This baseline minimizes the worst-case distance traveled among agents, ensuring equitable task distribution but lacking preference awareness.


\section{EG-MARL Framework}

\paragraph{\textbf{State, Observations}}

In a centralized Multi-Agent EG framework, agents have full access to the global state, including all agent positions, task locations, workloads, importance weights, and the global preference matrix linking agents to tasks. While this simplifies model design, the assumption of global visibility is unrealistic and limits scalability in larger or more distributed systems. By contrast, our environment avoids these limitations: individual agents do not observe the entire state but instead receive ego-centric observation vectors $o^{(i)}$ designed to enable decentralized, task-aware decision making under partial observability. 

Each agent $i$ observes its state with the geometric and semantic attributes of nearby tasks within its sensing range (feature vectors zero-padded for tasks that are not observed).

A task becomes observable only via local sensing or communication.

Let $\mathbf{p}_i \in \mathbb{R}^2$ and $\mathbf{v}_i \in \mathbb{R}^2$ denote the agent’s position and velocity in the global coordinate frame, and let $\text{type}_i$ and $\text{type}_j$ represent the categorical indices of agent $i$ and task $j$, respectively. 
For every task $j$, the relative position between the task and agent is given by $\mathbf{p}^j_i = \mathbf{p}^{\text{task}}_j - \mathbf{p}_i$, and the global preference matrix encodes the preference profile of all tasks all agents and the preference or efficiency $\text{preference}_{ji}$ of agent $i$ toward task $j$. 
The column vector $\mathbf{p}^\text{pref}_j\ $ corresponds to task $j$ 's preference profile of all agent types, while $w_j$ denotes the importance weight associated with the task type. 
To reflect the dynamic interaction between agents and tasks, we adapt a continuous occupancy variable $\eta_j \in [0,1]$ from Aloor et~al.~\cite{aloor2024cooperation}, defined as $\eta_j = 1 - \min_{i \in \mathcal{N}} \| \mathbf{p}_i - \mathbf{p}_j^{\text{task}} \|_2$, which increases as any agent approaches task $j$. 
A task history term $h_j$ records the last agent to interact with task $j$. 
For a single task $j$, the observation vector of agent $i$ is defined as:
\[
o^{(i)}_j = [p_i^j,\; u_{ji},\; p^{\text{pref}}_j,\; \eta_j,\; h_j,\; w_j].
\]

Following the observation space, we introduce how the agent’s task interactions translate into workload state dynamics. The agent’s workload state is represented by 
$\text{ego\_workload}_i = [w_j,\, w_j^{\text{progress}}(t),\, w_j^{\text{remain}}(t)]$, 
where $w_j$ is the total workload initially assigned to task $j$, 
$w_j^{\text{progress}}(t)$ tracks cumulative service completed by agent $i$, 
and $w_j^{\text{remain}}(t)$ is the remaining workload. 
When agent $i$ services task $j$, the \emph{remaining workload} evolves as
\[
w_j^{\text{remain}}(t + \Delta t) = w_j^{\text{remain}}(t) - \text{preference}_{ji}\,\Delta t,
\]
where $w_j^{\text{remain}}(t)$ denotes the remaining workload of task $j$ at time $t$, and $\text{preference}_{ji}$ is the efficiency of agent $i$ for task $j$.

\paragraph{\textbf{Graph Embeddings}}
\label{sec:obs}

Within each agent’s sensing radius, the environment may include other agents, tasks, obstacles, or walls. The ego agent $i$ constructs a local agent–entity graph $g^{(i)}$ that contains all entities within this region. For each entity $j$, a node embedding $x^{(i)}_j$ is formed in the agent’s local coordinate frame as
\[
x^{(i)}_j =
[\, 
\mathbf{v}_i^j,\;
\mathbf{p}_i^j,\;
\mathbf{p}_i^{\text{goal},j},\;
a_j,\;
t_j,\;
\text{pref}_{ji}^{\text{task}},\;
w_j,\;
\mathbf{c}_j,\;
e_j
\,],
\]
where $\mathbf{v}_i^j$ and $\mathbf{p}_i^j$ represent the relative velocity and position of entity $j$ with respect to agent $i$, and $\mathbf{p}_i^{\text{goal},j}$ denotes the relative location of the goal entity. The categorical variables $a_j$, $t_j$, and $e_j$ correspond to the agent type, landmark type, and entity class, respectively. All entities share this uniform embedding structure, though each type populates only the fields relevant to its semantics. For instance, for wall entities, the geometric term $\mathbf{c}_j = [\mathbf{p}_i^{\text{corner,o},j},\, \mathbf{p}_i^{\text{corner,d},j}]$ captures the relative coordinates of wall endpoints with respect to the ego agent. 

To ensure consistency across entity types, all embeddings use a fixed-length vector equal to the landmark feature dimension. Entity-specific features that do not apply are zero-padded.

Once node embeddings are computed, each agent aggregates relational information from its local agent--entity graph using a GNN. The graph includes bidirectional agent--agent edges and unilateral edges from tasks and obstacles to agents, with pairwise distances serving as edge weights for message passing.

Formally, the relational embedding is obtained as:
\begin{equation*}
x^{(i)}_{\mathrm{agg}} = \mathrm{GNN}\Big(x^{(i)}_i,\, x^{(i)}_j \;\forall j \in \mathcal{N}(i),\, g^{(i)}\Big),
\end{equation*}

where $x^{(i)}_i$ is the ego node embedding, $x^{(i)}_j$ are the neighboring node embeddings within agent $i$’s local graph $g^{(i)}$, and $\mathcal{N}(i)$ denotes the neighborhood of agent $i$.

The combined representation $[o^{(i)},\, x^{(i)}_{\mathrm{agg}}]$ is passed into the policy network, enabling agents to ground local observations in the broader relational context captured by the GNN.

\paragraph{\textbf{Communication}}  
Multi-robot deployments for exploration, monitoring, and search-and-rescue frequently operate under communication limits due to attenuation, interference, and obstructions~\cite{liu2021persistentmonitoring}. As a result, reliable communication is typically restricted to local neighborhoods, and coordination must depend on direct or multi-hop message passing.

To model these realistic constraints, each agent maintains local observations and shares this information only with neighbors within a fixed communication radius. Building on adaptive schemes such as AC2C~\cite{zhang2023ac2c}, we adopt a relay-based multi-hop propagation mechanism that spreads information locally to achieve global awareness while minimizing bandwidth usage. Embedded in our reinforcement learning framework, this mechanism allows agents to balance independent exploration with cooperative updates—learning when information should be shared and which neighbors should receive it—ensuring that fairness and coordination are preserved even under strict communication limits.
%

\paragraph{\textbf{Reward Design}}
Each agent $i$ receives a composite per-step reward that integrates exploration incentives, fairness shaping, workload progress, completion bonuses, and safety penalties.

\vspace{3pt}
\textit{Exploration Reward}
Agents receive a time-decaying bonus when they newly discover a task within their sensing range:
\[
r^{(i)}_{\text{explore}}(t) = 
\begin{cases}
    \eta_0 e^{-\gamma t}, & \text{if a new task is discovered}, \\[4pt]
    0, & \text{otherwise}.
\end{cases}
\]
This encourages agents to discover new tasks early.

\vspace{3pt}
\textit{Fair Assignment Shaping}
The Eisenberg--Gale allocation assigns each agent $i$ a designated task $g_{\pi(i)}$. Agents are penalized by their distance to this task and receive a one-time positive reward upon arrival:
\[
r^{(i)}_{\text{fair}} = -\|p_i - g_{\pi(i)}\| + \mathbf{1}_{\{\text{arrive}\}} R_{\text{arr}},
\]
where $p_i$ is the agent’s position and $R_{\text{arr}}$ is the arrival bonus.

\vspace{3pt}
\textit{Workload Progress}
When an agent arrives and services its assigned task, it receives a reward proportional to the workload completed during that interval. The decrease in workload follows the efficiency-driven dynamics defined in Section~\ref{sec:obs}, where the rate depends on $\text{preference}_{ji}$. Accordingly, the progress reward is expressed as
\[
r^{(i)}_{\text{progress}} = \kappa \cdot \Delta w_{ji},
\quad \Delta w_{ij} = \text{preference}_{ji}\,\Delta t,
\]
where $\kappa$ scales the incremental workload reduction into a per-step reward signal.

\vspace{3pt}
\textit{Completion Bonus}
Upon fully serving the workload at task $j$, the agent receives a completion reward:
\[
r^{(i)}_{\text{complete}} = \mathbf{1}_{\{w_j=0\}} R_{\text{comp}},
\]
where $w_j$ is the remaining workload and $R_{\text{comp}}$ is the completion bonus.

\vspace{3pt}
\textit{Collision Penalties}
Collisions with agents, obstacles, or walls incur negative rewards, discouraging unsafe navigation.
\begin{algorithm}[H]
\caption{EG-MARL Training Loop}
\begin{algorithmic}[1]
\Require Graph observation $G_t$, local observation $o_t$ 

\For{each episode}
  \State Form costs $c_{ij} = \|p_i - g_j\|$ and load prefs/weights
  \State Solve EG to obtain assignment matrix $x$
  \State Set each agent’s target $g_i = \arg\max_j x_{ij}$ \;

    \For{each timestep $t$}
    \State Update $G_t$ and construct $o_t$ via  agent–goal distances, visible-goal features, and sparse communication
    \State Execute action $A_t = \pi_\theta(G_t, o_t)$; step dynamics/workloads
    \State Receive reward 
    \[
        R(s_t, A_t) = \sum_{i=1}^{N} r_t^{(i)} .
    \]
  \EndFor

\EndFor
\end{algorithmic}
\end{algorithm}

During execution, agents operate solely using their learned decentralized policies and local observations; the Eisenberg–Gale program is never solved at runtime.
\section{Stochastic Online Assignment}
In this setting, we assume that agents can communicate without a limited range. The online algorithm operates in two alternating phases—cooperative exploration and assignment. During exploration, agents jointly survey the environment. Once a sufficient number of tasks have been discovered, a partial assignment is solved based on the Eisenberg--Gale program. Agents assigned to specific tasks proceed to service them, while the remaining agents continue exploring. This process repeats until all tasks have been successfully completed.

\textit{Exploration:} 
To ensure efficient exploration, a square grid is superimposed on the map, with grid width set to half the smallest sensing radius among all agents. A global map, shared by all agents, maintains information about which lattice points remain unexplored. Each agent queries this map, computes its distance $d_i$ to all unexplored lattice points, and samples one according to the normalized probability.
\[
P(p_i) = \frac{e^{-d_i}}{\sum_{k \in \mathcal{M}_{\text{unexplored}}} e^{-d_k}},
\]
This formulation ensures that closer lattice points are more likely to be selected. Once an agent selects a lattice point to explore, that point is marked as explored to prevent redundant exploration. As the agents move toward their selected locations, any other lattice points within their sensing radii are also marked as explored. This strategy leads to a cooperative and exhaustive search of the environment. Once a fixed number $k$ (a tunable hyperparameter) of tasks is discovered, the algorithm transitions to the partial assignment phase. If the number of undiscovered tasks is less than $k$, the assignment phase is triggered as soon as all undiscovered targets are found.

\textit{Assignment:}
During the partial assignment phase, the algorithm operates over the currently discovered but unassigned tasks and the currently free agents. 
Let \(\widehat{\mathcal{T}} = \mathcal{T}_{\text{disc}} \setminus \mathcal{T}_{\text{assign}}\) denote the active task set and \(\mathcal{A}_{\text{free}}\) the set of unassigned agents, with \(t = |\mathcal{A}_{\text{free}}|\). 
When exactly \(k\) tasks have been discovered (\(|\widehat{\mathcal{T}}| = k\)), the algorithm enumerates all \(\binom{t}{k}\) possible subsets \(\mathcal{A}_s \subseteq \mathcal{A}_{\text{free}}\) of size \(k\). 
For each subset \(\mathcal{A}_s\), it solves the Eisenberg--Gale optimization problem defined in Eqs.~(1a)--(1c), restricted to agents in \(\mathcal{A}_s\) and the discovered tasks \(\widehat{\mathcal{T}}\). 
The subset achieving the highest equilibrium objective value,
\(\mathcal{A}^\star = \arg\max_{|\mathcal{A}_s| = k} V_{\mathrm{EG}}(\mathcal{A}_s)\),
is selected, and the corresponding agents are assigned according to the optimal EG solution. 
When the number of remaining unassigned tasks satisfies \(|\widehat{\mathcal{T}}| < k\), the algorithm directly solves the same problem over the remaining tasks and currently free agents. 

 \begin{algorithm}[H]
\caption{Stochastic Online Assignment Algorithm}
\label{alg:stochastic-online}
\begin{algorithmic}[1]  
\small
\State \textbf{Input:} Agent set $\mathcal{A}$, task set $\mathcal{T}$, discovery threshold $k$
\State Initialize global exploration map $\mathcal{M}$; set $\mathcal{T}_{\text{disc}}\!\leftarrow\!\emptyset$, $\mathcal{T}_{\text{assign}}\!\leftarrow\!\emptyset$, $\mathcal{A}_{\text{free}}\!\leftarrow\!\mathcal{A}$
\While{$\mathcal{T}_{\text{assign}} \neq \mathcal{T}$}
    \Statex \textbf{Phase I: Cooperative Exploration}
    \For{each agent $i \in \mathcal{A}_{\text{free}}$}
        \State Compute sampling probability:
        \[
        P(p_i) \leftarrow \frac{e^{-d_i}}{\sum_{k \in \mathcal{M}_{\text{unexplored}}} e^{-d_k}}
        \]
        \State Sample unexplored lattice point $p_i \!\sim\! P(p_i)$
        \State Move toward $p_i$ and mark nearby cells in $\mathcal{M}$
        \If{a new task $j$ is discovered}
            \State $\mathcal{T}_{\text{disc}} \!\leftarrow\! \mathcal{T}_{\text{disc}} \cup \{j\}$
        \EndIf
    \EndFor
    \If{$|\mathcal{T}_{\text{disc}} \setminus \mathcal{T}_{\text{assign}}| \ge k$ \textbf{ or } all tasks discovered}
        \Statex \textbf{Phase II: Subset-Based EG Assignment}
        \If{$|\widehat{\mathcal{T}}| = k$}
            \State Enumerate all $k$-agent subsets $\mathcal{A}_s \subseteq \mathcal{A}_{\text{free}}$ of size $k$
            \For{each subset $\mathcal{A}_s$}
                \State Solve $\mathrm{EG}(\mathcal{A}_s,\,\widehat{\mathcal{T}})$ (Eqs.~1a--1c)
            \EndFor
            \State Select subset $\mathcal{A}^* \leftarrow \arg\max_{|\mathcal{A}_s|=k} U(\mathcal{A}_s)$
            \State Assign agents in $\mathcal{A}^*$ to tasks in $\widehat{\mathcal{T}}$ according to the optimal EG solution
        \ElsIf{$|\widehat{\mathcal{T}}| < k$}
            \State Solve $\mathrm{EG}(\mathcal{A}_{\text{free}},\,\widehat{\mathcal{T}})$ (Eqs.~1a--1c) directly and assign
        \EndIf
        \State Update $\mathcal{A}_{\text{free}}$ and $\mathcal{T}_{\text{assign}}$ accordingly
    \EndIf
\EndWhile
\end{algorithmic}
\end{algorithm}

The primary role of the stochastic online EG mechanism is to serve as a strong online baseline with centralized communication against which to evaluate EG-MARL, rather than as a directly deployable algorithm at very large scales. The subset enumeration step is combinatorial in the team size, but the resulting policies are highly informative: they reveal how close decentralized learned policies come to an online EG-guided allocation with full centralized coordination. In larger systems, the same framework could be combined with heuristic or sampling-based subset selection strategies (e.g., random sampling of candidate subsets or prioritizing subsets based on recent travel costs), which we leave for future work.

\section{Experiments}

We empirically evaluate the proposed EG-MARL and Stochastic Online Assignment frameworks under partial observability across varying team sizes (\(N = 3, 7, 10, 15\)). Each configuration uses a proportional map size of \(2.5 \times 2.5\), \(2.7 \times 2.7\), \(2.9 \times 2.9\), and \(3.2 \times 3.2\) respectively, to maintain consistent agent density. The evaluation encompasses three assignment formulations—Eisenberg--Gale, Hungarian, and Min--Max Distance. For the Eisenberg--Gale formulation, we compare decentralized (EG-MARL), online (Stochastic-Online), and centralized implementations, while the Hungarian and Min--Max Distance methods are evaluated only in their centralized forms as baselines.

\subsection{Experimental Setup}
We evaluate our methods in a spatial task-allocation environment adapted from the GraphMPE ``congestion-with-walls'' scenario, built atop the Fair-MARL framework~\cite{aloor2024cooperation}. Each episode initializes $N$ heterogeneous agents, $N$ heterogeneous tasks, static obstacles, and randomly oriented walls within a square environment. The positions of agents, tasks, and obstacles, as well as wall orientations, are resampled every episode to ensure diversity.

\textbf{Comparison between MARL methods.}
In our environment, an agent’s observable neighborhood changes continually as nearby agents, tasks, and obstacles enter or leave its sensing range. To highlight the role of relational representations, we compare RMAPPO (MPE) with Graph\_RMAPPO (Graph\_MPE). RMAPPO encodes visible entities via simple concatenation into a fixed-length vector, a representation that is neither permutation-invariant nor robust to changes in neighborhood size. As shown in InforMARL~\cite{nayak2022scalable} and in our reproduced results (Fig.~2), this causes unstable training and early plateaus because the policy receives syntactically different observations for semantically equivalent configurations. In contrast, Graph\_RMAPPO constructs a local relational graph and performs permutation-invariant message passing over neighbors, enabling consistent representation of dynamically varying entity sets.

\begin{figure}[H]
    \centering
    \includegraphics[width=1.0\linewidth]{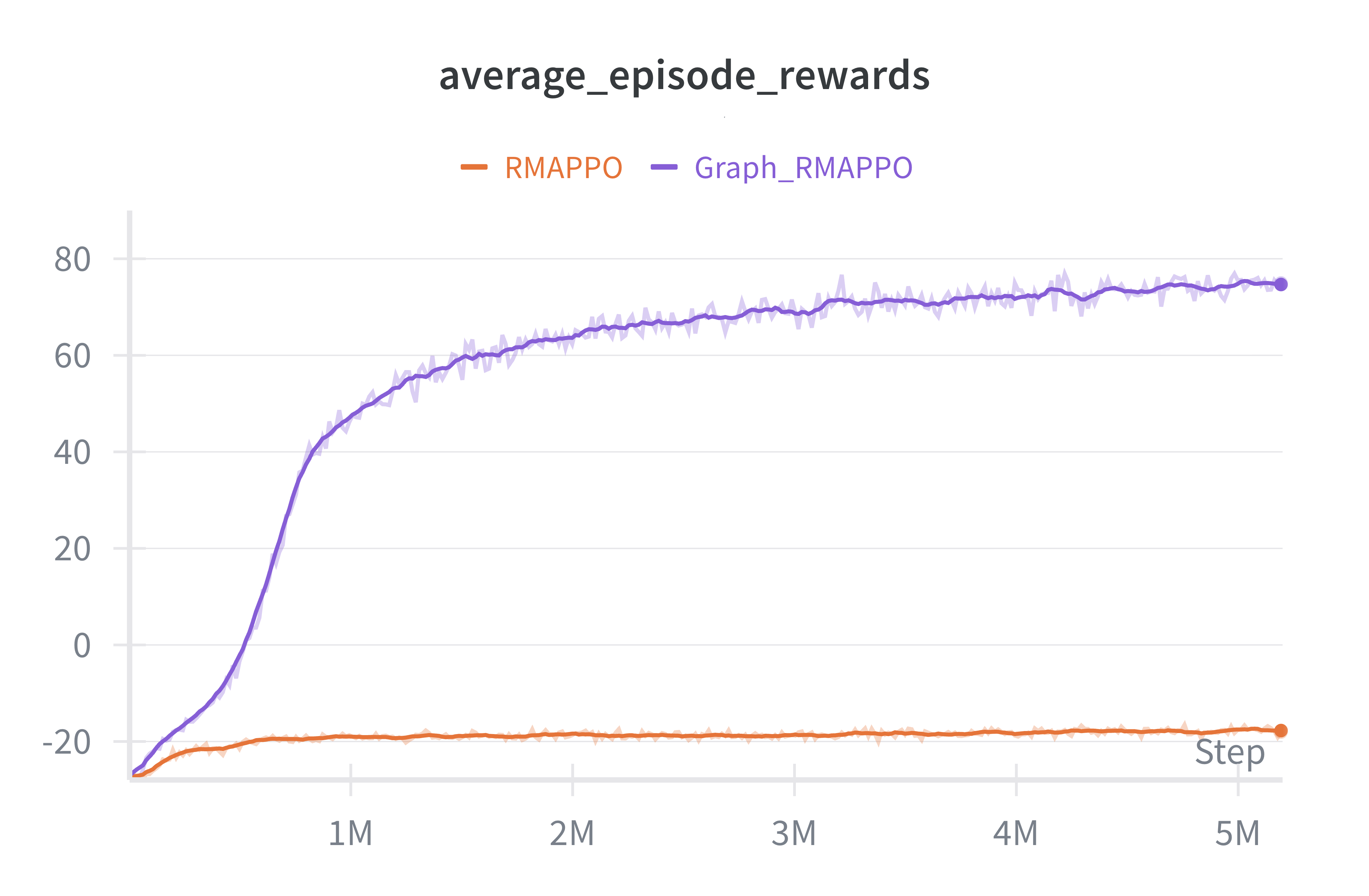}
    \caption{Running-average episode rewards for 7 agents under RMAPPO and Graph\_RMAPPO}
    \label{fig:graph_obs}
\end{figure}
\FloatBarrier

\label{sec:obs}

\subsection{Evaluation Metrics}

We evaluate the efficacy of our algorithms using the following metrics, which collectively capture both efficiency and fairness under decentralized execution.

\begin{enumerate}[leftmargin=*,topsep=2pt,itemsep=2pt,parsep=0pt]
    \item \textbf{Regret.} We measure the regret of any decentralized partially observable policy $\pi$ by comparing the centralized EG objective~\eqref{eq:eg-objective}. Let $d^*_{ij}$ denote the shortest path distance used in the centralized EG program between agent $i$ and task $j$. Accordingly, $u_{ji}(d^*_{ij})$ captures utilities under the shortest centralized distances $d^*_{ij}$. $w_j$ is the task's importance weight.
    The optimal centralized allocation is denoted by $x^* = (x^*_{ji})$, obtained from solving the centralized assignment program.  
    The centralized objective is defined as:
    \[
    U^*_N = \max_{x \in \mathcal{X}} \sum_{j} w_j\, \log\!\Big(\sum_{i} u_{ji}(d^*_{ij})\, x_{ji}\Big),
    \]

For a decentralized, partially observable policy $\pi$, let $d_{ij}$ denote the realized distance traveled by agent $i$ to task $j$. Utility $u_{ji}(d)$ quantifies the benefit obtained by agent $i$ when serving the task $j$ at distance $d$, typically defined as $u_{ji}(d) = (\alpha^{d})\,\text{preference}_{ji}$.
Let $x_{ji}(\pi)$ denote the (possibly stochastic) allocation realized under policy $\pi$. The decentralized utility is then given by:
\[
U_N(\pi) = \sum_{j} w_j\, \log\!\Big(\sum_{i} u_{ji}(d_{ij})\, x_{ji}(\pi)\Big),
\]

The regret of the decentralized partially observable policy $\pi$ is then defined as
\[
R_N(\pi) = \mathbb{E}[U^*_N - U_N(\pi)].
\]
Here, $N$ denotes the number of agents in the system, and the expectation $\mathbb{E}[\cdot]$ is computed over 100 randomized evaluation episodes.

    \item \textbf{Completion Time.} The total number of timesteps required for all workloads to be completed. This metric captures both navigation and service efficiency.

    \item \textbf{Total Distance Traveled.} The cumulative distance traversed by all agents over an episode, reflecting overall motion cost and coordination efficiency.
    
    \item \textbf{Fairness} 
    For each task \( j \), we define the \emph{normalized utility-to-budget ratio} as 
    \( \rho_j = u_j^{\text{realized}} / w_j \), 
    where \( u_j^{\text{realized}} = \sum_i u_{ji}(d_{ij})\,x_{ji}(\pi) \)
    is the total utility achieved for task \( j \) under policy \( \pi \), and \( w_j \) denotes its budget or importance weight.
    \paragraph{Coefficient of Variation (CV)}
    Fairness across tasks is then measured using the reciprocal of the \emph{CV} of \(\{\rho_j\}_{j=1}^{m}\), 
    defined as \( F(\rho) = 1/\text{CV}(\rho) = \mu_{\rho_j}/\sigma_{\rho_j} \),
    where \( \mu_{\rho_j} \) and \( \sigma_{\rho_j} \) denote the mean and standard deviation of the set \( \{\rho_j\} \), respectively. 
    A higher \( F(\rho) \) value indicates more uniform service quality and greater fairness across tasks. This metric captures \textit{distributional fairness}---the equity of realized task outcomes.
    \paragraph{Jain’s Fairness Index}
    We also report Jain’s Index, defined as 
    \( J(\rho) = \frac{(\sum_{j} \rho_j)^2}{m \sum_{j} \rho_j^2} \), 
    which equals 1 under perfectly equalized task outcomes and decreases as disparities grow. 
 
\end{enumerate}

\subsection{Regret under Partial Observability}

We evaluate how closely decentralized policies approximate the centralized equilibrium using regret \(R_N(\pi)\), defined as the expected utility gap between decentralized and centralized objectives. Table~\ref{tab:regret_results} summarizes results across assignment formulations and algorithms for different team sizes.

\begin{table}[H]
\centering
\caption{Regret $R = \mathbb{E}[U^* - U(\pi)]$ for EG-MARL and Online Assignment under the Eisenberg--Gale formulation. Lower values indicate closer alignment with the centralized optimum.}
\label{tab:regret_results}
\resizebox{\linewidth}{!}{
\begin{tabular}{p{1.2cm}|l|c|c|c|c}
\toprule
Assign. & Algorithm & 3 Agents & 7 Agents & 10 Agents & 15 Agents \\
\midrule
\multirow{2}{*}{EG} 
    & MARL & \textbf{0.08} & 3.99 & 6.2 & 9.88 \\
    & Online & 2.22 &  \textbf{0.97} & \textbf{3.7} & 7.05 \\
\midrule
\end{tabular}
}

\end{table}
\FloatBarrier

Following Table~\ref{tab:regret_results}, we observe that EG-MARL achieves the lowest regret for smaller teams ($N=3$). As the number of agents increases ($N=7,10,15$), regret rises for both frameworks, reflecting the greater difficulty of coordinating under partial observability. The Online Assignment performs competitively at $N=7$, $N=10$, and $N=15$.

\subsection{Fairness and Efficiency Analysis}
Table~\ref{tab:eg_hungarian_results} and Figure~\ref{fig:fairness_scatter} summarize results across all configurations. 
For the Eisenberg--Gale (EG) formulation, we compare EG-MARL, Online Assignment, and the Centralized Oracle. 
The Hungarian and Min--Max Distance methods serve as centralized baselines. 
Metrics include completion time ($T$), total distance ($D$), and fairness ($F(\rho)$), where lower $T$ and $D$ and higher $F(\rho)$ indicate better performance.

\begin{table}[H]
\centering
\caption{Comparison across agent counts ($N=3,7,10$) for all assignment rules and metrics include completion time $T$ and total distance $D$ (lower is better), fairness $F(\rho)$, and Jain’s Index $J(\rho)$ (higher is better).}

\label{tab:eg_hungarian_results}

\resizebox{\columnwidth}{!}{%
\begin{tabular}{p{0.7cm}|c|l|c|c|c|c}
\toprule
\# Agents & Assign. & Algorithm & $T$ ($\downarrow$) & $D$ ($\downarrow$) & $F(\rho)$ ($\uparrow$) & $J(\rho)$ ($\uparrow$) \\
\midrule

\multirow{3}{*}{3} 
    & \multirow{3}{*}{EG} 
        & MARL & 2.86 & 4.42 & 3.16 & \textbf{0.83} \\
    &   & Online (k=2) & 6.70 & 11.09 & 3.26 & 0.82 \\
    &   & Centralized & \textbf{2.61} & \textbf{4.03} & 3.14 & \textbf{0.83} \\
\cmidrule{2-7}
    & Hungarian & Centralized & 3.49 & 4.03 & \textbf{3.39} & 0.81 \\
\cmidrule{2-7}
    & Min--Max & Centralized & 4.13 & 4.14 & 2.17 & 0.77 \\
\midrule

\multirow{3}{*}{7} 
    & \multirow{3}{*}{EG} 
        & MARL & \textbf{2.75} & 10.50 & 1.906 & 0.755 \\
    &   & Online (k=3) & 6.53 & 21.74 & \textbf{1.92} & \textbf{0.758} \\
    &   & Centralized & 3.03 & \textbf{10.01} & 1.91 & \textbf{0.758} \\
\cmidrule{2-7}
    & Hungarian & Centralized & 6.65 & 10.13 & 1.45 & 0.64 \\
\cmidrule{2-7}
    & Min--Max & Centralized & 7.07 & 9.92 & 1.51 & 0.66 \\
\midrule

\multirow{3}{*}{10} 
    & \multirow{3}{*}{EG} 
        & MARL & \textbf{2.73} & 16.50 & 1.76 & 0.734 \\
    &   & Online (k=4) & 6.79 & 28.61 & \textbf{1.84} & 0.748 \\
    &   & Centralized & 2.76 & \textbf{15.40} & 1.83 & \textbf{0.749} \\
\cmidrule{2-7}
    & Hungarian & Centralized & 5.51 & 15.48 & 1.35 & 0.622 \\
\cmidrule{2-7}
    & Min--Max & Centralized & 5.84 & 15.56 & 1.33 & 0.614 \\
\midrule

\multirow{3}{*}{15} 
    & \multirow{3}{*}{EG} 
        & MARL & \textbf{3.01} & 26.19 & 1.64 & 0.717 \\
    &   & Online (k=6) & 6.90 & 46.82 & 1.63 & 0.715 \\
    &   & Centralized & 3.10 & \textbf{25.25} & \textbf{1.68} & \textbf{0.727} \\
\cmidrule{2-7}
    & Hungarian & Centralized & 6.14 & 24.9 & 1.195 & 0.572 \\
\cmidrule{2-7}
    & Min--Max & Centralized & 6.77 & 25.4 & 1.196 & 0.568 \\
\bottomrule

\end{tabular}%
} 

\end{table}

\begin{figure}[H]
    \centering
    \includegraphics[width=\linewidth]{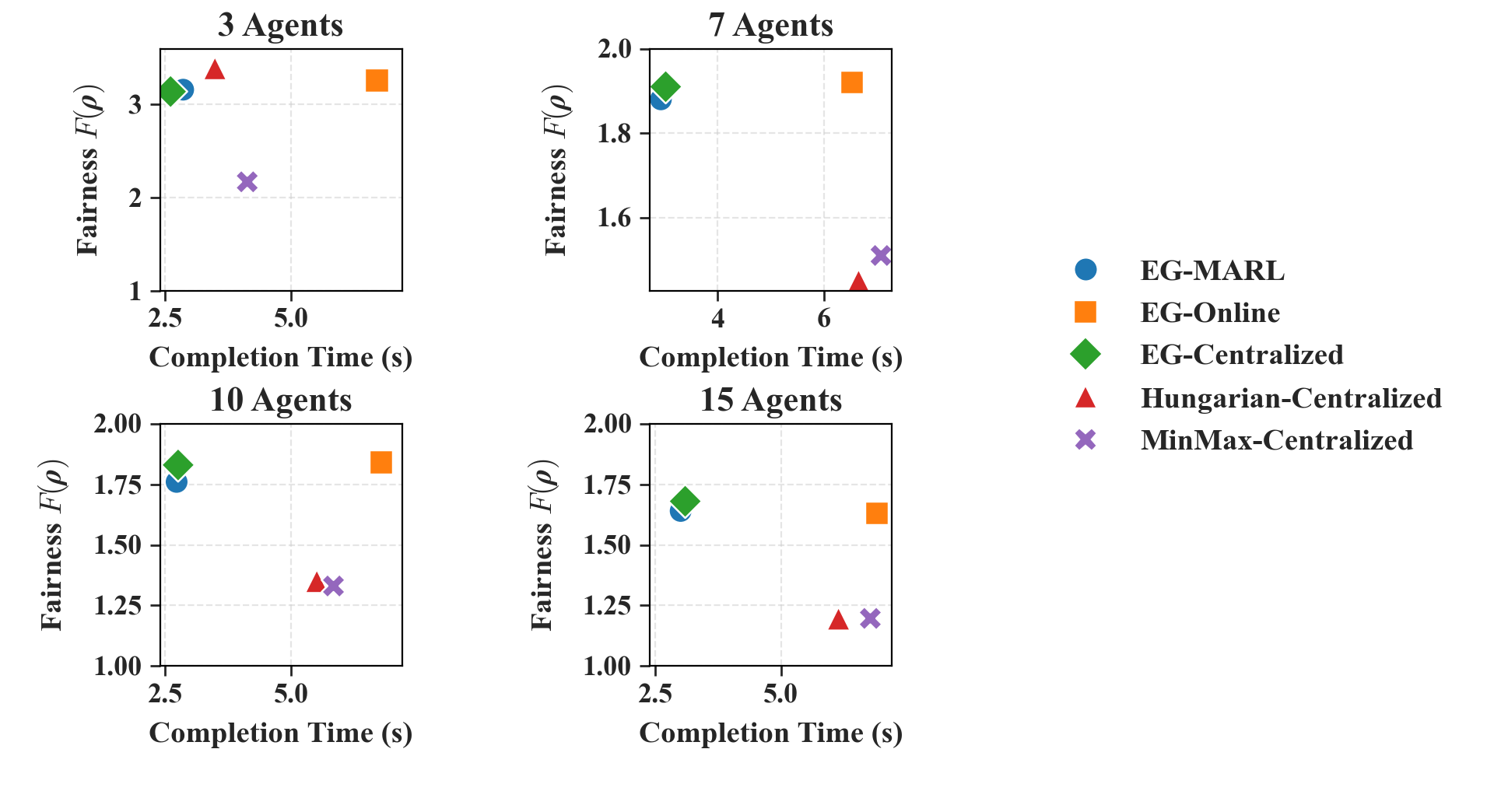}
    \caption{Fairness--Efficiency tradeoff across all algorithms for 3, 7, 10, and 15 agents.}
    \label{fig:fairness_scatter}
\end{figure}
\FloatBarrier

Across all team sizes, both fairness measures—$1/\mathrm{CV}(\rho)$ and Jain’s Index—consistently indicate that the EG-based algorithms deliver more equitable allocations, with one notable exception. In the 3-agent case, fairness exhibits greater variance due to small-sample sensitivity, and the Hungarian baseline achieves slightly higher fairness. We note that the Hungarian does not take into account spatial distance in utility. For the 7, 10 , and 15-agent settings, both fairness metrics strongly demonstrate that the Stochastic-Online and centralized EG formulations attain the highest fairness. The alignment between these two fairness measures provides robust evidence that equilibrium-guided approaches yield more balanced task outcomes than traditional baselines in larger teams. As shown in Fig.~\ref{fig:fairness_scatter}, EG-based methods cluster in the preferred upper-left region—combining higher fairness with lower completion time—while the Hungarian and Min–Max baselines shift lower or rightward in accordance with their efficiency- and distance-driven objectives.

Although Table~\ref{tab:regret_results} shows that EG-MARL's regret increases with team size, Table~\ref{tab:eg_hungarian_results} reveals that it still achieves the fastest completion times for $N=7$ and $N=10$. This outcome suggests that the decentralized EG-MARL policy learns to prioritize preference alignment, completing tasks more quickly, even when this occasionally leads to slightly longer travel distances compared to the centralized solution. The tunable parameter $\alpha$ in the distance-discounted utility $u_{ji} = (\alpha^{d_{ij}})\,\text{preference}_{ji}$ controls the balance between spatial efficiency and preference weighting. In our experiments, we set $\alpha = 0.97$, favoring preference alignment and contributing to faster task completion.

For the 7, 10, and 15-agent cases, the EG optimization reveals complementary strengths across the three methods: \textbf{EG-MARL} achieves the fastest completion times, the \textbf{Stochastic-Online} approach maintains the highest fairness, and the \textbf{Centralized} oracle yields the lowest travel distances. These strengths highlight the trade-offs among scalability, fairness, and efficiency that emerge under different coordination paradigms.

The higher fairness of the Eisenberg--Gale (EG) formulation arises from its logarithmic (concave) utility structure, which, as shown by Amador and Zivan~\cite{cooperative_task_alloc}, naturally incentivizes cooperation among agents. The concavity of the log term reduces marginal gains for already advantaged tasks and encourages balanced allocations.

Table~\ref{tab:eg_k7} analyzes the sensitivity of the Stochastic Online Assignment to the subset size $k$. 
\begin{table}[H]
\centering
\caption{Performance of EG Online for different $k$ values (7 agents)}
\label{tab:eg_k7}
\begin{tabular}{c|c|c|c|c|c}
\toprule
$k$ & Regret ($\downarrow$) & $T$ ($\downarrow$) & $D$ ($\downarrow$) & $F(\rho)$ ($\uparrow$) & $J(\rho)$ ($\uparrow$) \\
\midrule
2 & 2.48 & 7.44 & \textbf{21.16} & 1.91 & 0.755 \\
3 & \textbf{0.97} & 6.53 & 21.74 & 1.92 & 0.758 \\
4 & 1.03 & 6.40 & 23.07 & \textbf{1.95} & \textbf{0.762} \\
    5 & 0.99 & 6.63 & 22.77 & 1.94 & \textbf{0.762} \\
6 & 1.03 & 6.05 & 23.15 & 1.94 & 0.760 \\
7 & 1.41 & \textbf{5.53} & 27.24 & 1.95 & \textbf{0.762} \\
\bottomrule
\end{tabular}
\end{table}
\FloatBarrier
\FloatBarrier

Regret is lowest for moderate subsets ($k=3$–$5$), which balance exploration time with the assignment optimality gap from the centralized solution. As $k$ increases, completion time $T$ decreases monotonically since larger subsets more closely approximate the centralized assignment; when $k=7$, the allocation becomes identical to the centralized EG solution, minimizing idle time but increasing travel distance. In both fairness measures, $F(\rho)$ and Jain’s Index $J(\rho)$, remain largely stable across all subset sizes~$k$, indicating that the online assignment preserves equilibrium balance regardless of how many agents are jointly considered. Minor variations at higher $k$ reflect small sensitivity but do not alter the fairness trend.

\subsection{Ablation Study}

\paragraph{Simple Spread MPE}
In the ablation study, we remove the EG fairness-shaping reward that originally guided agents toward their equilibrium-assigned tasks. Instead, the ablated reward uses a standard MPE-style distance-shaping term that, for each task $j$, computes the minimum distance from any agent to that task:
Intuitively, this mirrors the MPE 'spread' reward: using the minimum agent–task distance encourages the team to spread out so that at least one agent is close to every task. However, because this shaping ignores task priorities, workloads, and heterogeneous preferences, it promotes spatial coverage rather than fairness‑aware assignment.
\[
r_{\text{dist}} = - \min_{i \in \mathcal{A}} \|p_i - g_j\|.
\]
We also evaluate this baseline without the workload–progress reward. While all other rewards remain the same as EG-MARL.
\paragraph{No Exploration Reward} We construct a second EG-MARL variant that removes the exploration reward. 
\begin{table}[H]
\centering
\caption{Ablation results for the 7-agent setting: EG-MARL vs.\ Simple Spread MPE. vs.\ No Exploration Reward.}
\label{tab:ablation_7_agents}
\resizebox{\columnwidth}{!}{
\begin{tabular}{l|c|c|c|c}
\toprule
Algorithm & $T$ ($\downarrow$) & $D$ ($\downarrow$) & $F(\rho)$ ($\uparrow$) & $J(\rho)$ ($\uparrow$) \\
\midrule
EG-MARL & 2.75 & 10.50 & \textbf{1.91} & \textbf{0.755} \\
Simple Spread MPE & 3.62 & 12.08 & 1.895 & 0.737 \\
No Exploration Reward& \textbf{2.72} & \textbf{10.37} & 1.83 & 0.737 \\
\bottomrule
\end{tabular}
}
\end{table} 
We observe from Table~\ref{tab:ablation_7_agents} that removing the EG fairness-shaping increases both completion time and travel distance. Despite relying solely on distance-based shaping, the Simple Spread MPE Baseline maintains reasonable fairness, though it consistently underperforms the EG-MARL model. 

While the No Exploration Reward variant yielded completion time and total distance remaining comparable to those of the full EG-MARL model, its fairness metrics degraded slightly. We hypothesize that, without an exploration incentive, agents prematurely commit to early-discovered tasks, which in turn reduces fairness.

\subsection {Webots Validation in a Warehouse Environment}

\paragraph*{Experimental Setup}
We use the centralized assignemnt method to demonstrate a use-case of our framework in a Webots warehouse-like scene containing a cabinet, pallets, box and gas container. Five heterogeneous differential-drive robots serve as agents, and a stationary supervisor centrally computes assignments to fice differenst tasks. Each robot and workstation has an associated type that determines skill preference and workload priority. The supervisor collects robot positions, workstation locations, and obstacles, and constructs utilities using distance-discounted preference scores. 

The Webots supervisor first constructs a two-dimensional occupancy grid of the warehouse and computes the shortest collision-free path from each robot $i$ to each workstation $j$ using a grid-based A* planner. These path lengths define the distance-aware utilities used in the Eisenberg--Gale assignment, which determines the robot--workstation matching. During execution, the supervisor streams the next waypoint of each robot's precomputed path through low-bandwidth emitter messages, while the robots rely on their onboard go-to-goal controllers to follow the path and avoid obstacles.

\paragraph*{Case Study:  Heterogeneous Robots and Tasks}
To model heterogeneous robot capabilities and the varying urgency of warehouse tasks, we define skill--task compatibility as follows:
\begin{itemize}[nosep, leftmargin=*]
    \item \textbf{Summit-XL Steel}: A heavy-duty platform with the strongest capability for pallet and other high-load tasks.
    \item \textbf{youBot}: Well-suited for delicate manipulation tasks such as handling gas canisters due to its dexterous arm.
    \item \textbf{TurtleBot3 Burger}: A lightweight, agile platform designed for navigating narrow spaces and interacting with drawers, cabinets, and enclosed storage.
    \item \textbf{Pioneer 3-AT}: A robust all-terrain robot with sreliable performance across moderately heavy warehouse tasks.
    \item \textbf{Pioneer 3-DX}: Optimized for fast navigation and high-throughput roles, best with conveyor and open-flow stations.
\end{itemize}
Task-weight (budget) values further encode urgency—high-throughput or safety-critical stations, such as the conveyor or gas canister storage, receive larger weights, while low‑impact stations, such as cardboard boxes, receive lower ones.

Figure 4 illustrates the resulting assignment in our Webots deployment. The Summit-XL Steel is assigned to the wooden pallet stack, the TurtleBot3 Burger to the gas canister, the Pioneer 3-DX to the conveyor platform, the Pioneer 3-AT to the cabinet, and the youBot to the cardboard box.
\begin{figure}[!ht]
    \centering
    
     \caption{Eisenberg--Gale assignment in Webots. The Summit-XL Steel is assigned to the pallet stack, the TurtleBot3 Burger to the cabinet, the Pioneer 3-DX to the conveyor, the youBot to the gas canister, and the Pioneer 3-AT to the cardboard box.}
     \vspace{6pt}
    \begin{subfigure}[t]{\columnwidth}
        \centering
        \includegraphics[width=\columnwidth]{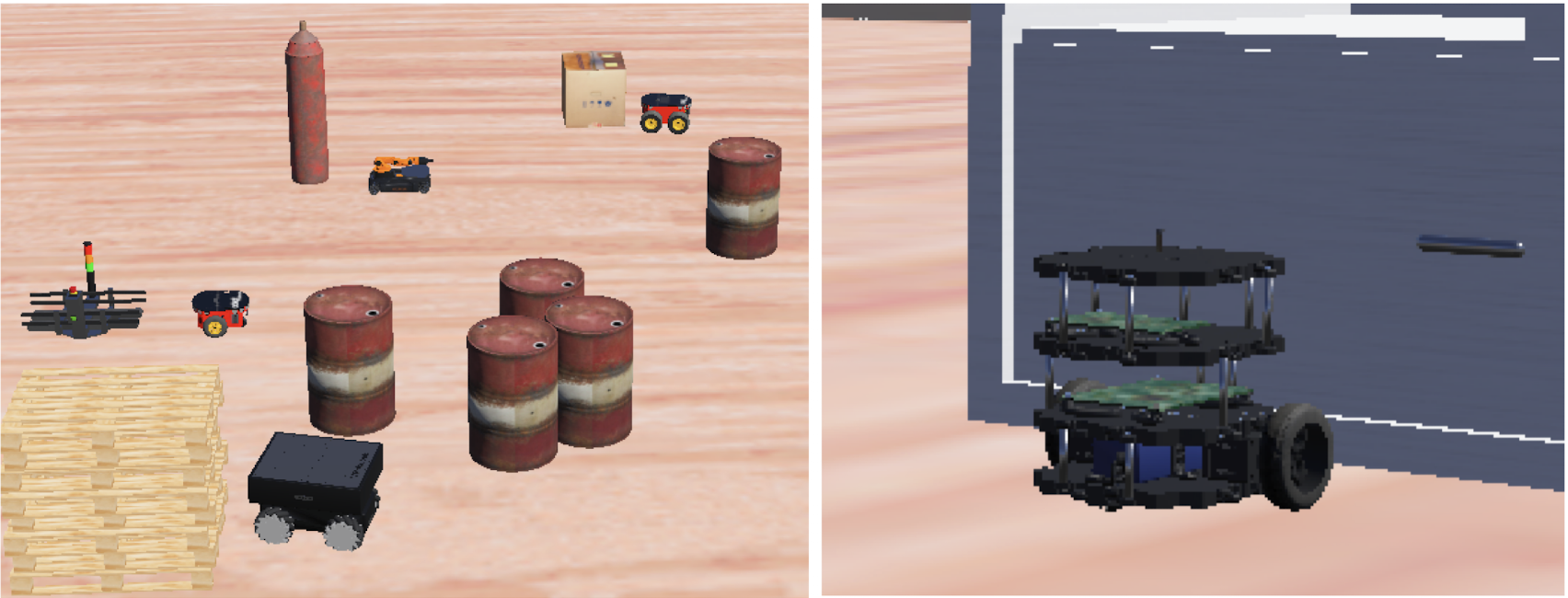}
    \end{subfigure}
   
    \label{fig:appendix_webots}
\end{figure}

\FloatBarrier

\paragraph*{Performance Evaluation}
We evaluate the Webots warehouse deployment using the same fairness and efficiency metrics as in our MPE experiments.
\begin{table}[!ht]
\centering
\caption{Webots Evaluation Metrics for EG, Hungarian, and Min–Max. Bold values indicate the best performance per column.}
\label{tab:webots_eval}
\begin{tabular}{l|c|c|c|c}
\toprule
Method & $T$ ($\downarrow$) & $D$ ($\downarrow$) & $F(\rho)$ ($\uparrow$) & $J(\rho)$ ($\uparrow$) \\
\midrule
EG         & 54.53 & 36.82 & \textbf{1.710} & \textbf{0.745} \\
Hungarian  & \textbf{51.78}          & 34.27 & 1.259          & 0.613 \\
Min--Max   & 70.85          & \textbf{36.86} & 1.438 & 0.674 \\
\bottomrule
\end{tabular}
\end{table}
\FloatBarrier

\paragraph*{Summary of Findings}
In this case study, the centralized Eisenberg--Gale (EG) allocation achieves the strongest overall balance between fairness and efficiency. EG attains the highest fairness scores ($F(\rho)$ and $J(\rho)$) while also producing the competitive completion time among the three assignment methods. The Hungarian method, despite achieving the best completion time and competitive travel distance, exhibits substantially lower fairness due to prioritizing preference alignment without regard to the priority weight across tasks. The Min--Max baseline yields the shortest total distance traveled but shows lower fairness compared to EG, as minimizing worst-case travel does not incorporate task urgency or skill--task compatibility. These results are consistent with our simulation findings and demonstrate that EG equilibrium-guided assignment provides more equitable task outcomes while maintaining competitive spatial efficiency in realistic heterogeneous multi-robot deployments. This warehouse scenario mirrors our navigation-and-service formulation: tasks correspond to shelf locations with item requests, and heterogeneous robots must navigate around obstacles to service them. Although this experiment uses centralized supervision rather than learned decentralized policies, it demonstrates that the fairness–efficiency trade-offs observed in MPE simulations carry over to a realistic, heterogeneous multi-robot warehouse setting.

\section{Assumptions and Limitations}

Our formulation makes several modeling assumptions. We consider one-to-one assignments with no preemption, and tasks are static in our experiments. EG-MARL assumes local, range-limited communication with multi-hop propagation, while the Online Assignment method operates under full centralized communication and serves as a baseline. We remark that the computational bottleneck of the Stochastic Online Assignment is dominated by the combinatorial enumeration term $\binom{N}{k}$, resulting in a complexity of $T = \mathcal{O} \left( \binom{N}{k} \cdot \text{Poly}(k) \right)$. This exponential dependence on the team size $N$ limits its practical use to systems with a small number of agents. In addition, we recognize the statistical bias from the testing samples. Finally, we acknowledge that the test environments for different team sizes were not held constant, which may introduce sample bias in the results across settings.

Finally, we clarify that EG-MARL is not an approximation algorithm for the underlying Dec-POMDP, but rather an empirically EG-guided learning framework that uses the equilibrium solution only as a supervisory signal during training.

\section{Conclusion and Future Work}

This work presented two complementary frameworks that bridge competitive equilibrium theory with decentralized coordination in spatial multi-agent systems. The proposed EG-MARL framework integrates the Eisenberg--Gale equilibrium into a CTDE training paradigm, enabling agents to achieve near-centralized efficiency and fairness under partial observability through equilibrium-guided reward shaping. In parallel, the Stochastic Online Assignment algorithm extends equilibrium-based reasoning to dynamic exploration settings, allowing agents to perform subset-based assignments as tasks are discovered in real time. Together, these methods demonstrate that decentralized agents can approximate Pareto-efficient and fairness-aware outcomes without relying on global state information.

Future work will explore extending this framework to settings with continuously arriving tasks and evolving workloads, where both urgency and availability vary over time. Another direction involves incorporating adaptive communication learning into EG-MARL, allowing agents to selectively share task and state information under bandwidth constraints. Finally, extending the Eisenberg--Gale formulation to cooperative servicing—where multiple agents jointly complete a single task—offers a natural avenue for modeling fractional workloads and dynamic fairness in fully cooperative, heterogeneous systems.
Another important direction is to incorporate agent-side fairness, recognizing that real multi-robot teams operate under heterogeneous resource constraints such as battery life and energy usage. Addressing these disparities may require new equilibrium formulations or multi-objective fairness criteria that jointly balance task with equitable agent-level fairness.

\end{document}